\newtheorem{property}{Property}
\begin{document}

\title{Distributed Compressive Sensing Based  Doubly Selective  Channel Estimation for  Large-Scale  MIMO Systems}


\author{\IEEEauthorblockN{ Bo~Gong, Qibo~Qin, Xiang~Ren, Lin~Gui, Hanwen~Luo and Wen~Chen}

\IEEEauthorblockA{Department of Electronic Engineering, Shanghai Jiao Tong University, China\\
Email: \{gongbo;~ qinqibo;~ rex;~ guilin;~  hwluo;~ wenchen\}@sjtu.edu.cn}}

\maketitle
\begin{abstract}
Doubly selective (DS) channel estimation in large-scale multiple-input multiple-output (MIMO) systems is a  challenging problem due to the requirement of unaffordable pilot overheads and prohibitive complexity. In this paper, we propose a novel distributed compressive sensing (DCS) based channel estimation scheme to solve this problem. In the scheme, we introduce the basis expansion model (BEM) to reduce the required channel coefficients and  pilot overheads. And due to the common  sparsity of all the transmit-receive antenna pairs in delay domain, we estimate the BEM coefficients by considering the DCS framework,   which has a simple linear structure with low complexity. Further more, a linear smoothing method  is proposed to improve the estimation accuracy. Finally, we conduct various simulations to verify the validity of the proposed scheme and demonstrate the performance gains of  the proposed scheme compared with conventional schemes.
\end{abstract}


\section{Introduction}

Large-scale  multiple-input multiple-output (MIMO) attracts much academic interest and  is considered as a promising technology in the incoming fifth-generation cellular systems \cite{Larsson2014}. It enhances the data throughput and improves the link reliability of wireless communication system by taking advantage of the spatial multiplexing gains. In order to benefit from large-scale MIMO, one must obtain accurate channel state information (CSI)  which guarantees data recovery and contributes to multi-antenna array gains.

 Time and frequency selective channel, which is also referred to as  doubly-selective (DS) channel,  is related to many wireless access links, such as  high-speed trains \cite{Ren2015} and millimeter-wave  communications \cite{Rangan2014}. Frequency selectivity is caused by multipath propagation and time selectivity results from Doppler shift. For the DS channel estimation in large-scale MIMO-OFDM systems, there exist a large number of transmitting antennas,  inter-carrier interference (ICI) resulting  from the time selectivity  and  the multipath effect caused by the frequency selectivity. All of them increase the number of channel coefficients to be estimated greatly, resulting in the requirement of unaffordable pilot overheads and prohibitive complexity.

 Most of the researchers adopt time division duplex (TDD) in large-scale MIMO systems. They  take advantage of the channel reciprocity between the uplink and downlink channels. The uplink channel estimation is relatively simple due to the limited number of antennas in mobile terminals.  And then the transposition of CSI from uplink training is utilized as the downlink CSI directly.   But it is not suited for DS channels for the reason that the uplink CSI may be outdated for the  time selectivity.  In our proposed scheme, we estimate the downlink CSI without the uplink channel information in TDD large-scale MIMO systems. To our best knowledge, little has been done about downlink DS channel estimation in large-scale  MIMO systems.

Compressive sensing (CS) is an important framework to lower the pilot overheads and complexity of the channel estimation by taking advantage of the channel sparsity.  In \cite{Masood2015}, the authors propose compressive estimation schemes for flat fading  channel in large-scale MIMO systems.  \cite{Rao2014}  proposes a compressive CSIT estimation scheme under frequency-selective channels for frequency division duplex (FDD) large-scale MIMO systems. \cite{Nan2015},\cite{Hou2014} consider the compressive frequency-selective channel estimation for TDD large-scale MIMO systems. \cite{Ren2015},\cite{Cheng2013}  present the researches about DS channel estimation based on CS and distributed compressive sensing (DCS) in single-input single-output (SISO) systems. \cite{Ren2013} proposes low coherence compressed (LCC) channel estimation scheme for DS channels in MIMO systems. However, this scheme utilizes the sparsity in delay-doppler domain which is reduced notably by  a large doppler shift and a large  number of antennas.

In this paper, a compressive channel estimation scheme for DS channel in large-scale MIMO systems  is proposed. In this scheme,  the basis expansion model (BEM) is introduced to reduce the number of coefficients to be estimated. The number of BEM coefficients is $D/N$ of the number of channel coefficients, in which $D$ is the BEM order and $N$
  is the number of subcarriers, $D \ll N$. As the number of the coefficients to be estimated is reduced, the required pilot overheads  decrease as well. Then we  analyze the common sparsity of the BEM coefficients between all the transmit-receive antenna pairs in delay domain.  The BEM coefficients estimation is formulated as a DCS problem, which has a linear structure with low complexity. Moreover,  a linear smoothing method in large-scale MIMO systems is proposed to  improve the accuracy of the estimation by reducing the modeling error.  Finally, simulation results verify the effectiveness of the proposed scheme and  show the performance gains compared with the conventional schemes.

\emph{Notations}: $\left(  \cdot  \right)^T$ denotes matrix transposition, $\left(  \cdot  \right)^H$ represents matrix conjugate transposition.  $diag(\cdot)$ means a diagonal matrix, $\left|  \cdot  \right|$ denotes the absolute value, $\left\langle { \cdot , \cdot } \right\rangle $ denotes the inner product, ${\left\|  \cdot  \right\|_2}$ stands for the Euclidean norm, ${\left\|  \cdot  \right\|_0}$ denotes the number of nonzero values.  $ \otimes $ represents Kronecker product,  $\mathcal{S}$ indicates a set, ${A\left[m,n\right]}$ represents  the $(m+1,n+1)$-th  element of matrix $\mathbf A$. ${{\left[\mathbf A\right]}_\mathcal{S}}$ represents the selected rows of $\mathbf A$, whose indices correspond to the set $\mathcal{S}$. $\mathcal{CN}(0,{\sigma ^2})$  represents the complex Gaussian distribution with zero mean and $\sigma ^2$ variance. $\mathbf{I}_x$ means the identity matrix of order $x$. $vec(\mathbf{A})$ denotes the column-ordered vectorization of matrix $\mathbf{A}$.

   .

\section{System model and Background}

\subsection{System model}
\subsubsection{Large-scale MIMO}

  We consider a TDD large-scale  MIMO-OFDM system, in which the base station  with a great many  antennas  serves a number of terminals with a single antenna. The antenna array is arranged in a square, which consists of ${N_t}$ = $Z$ $\times$ $Z$ antennas. For each terminal, the downlink transmission includes $N_t$ transmitting antennas and one receiving antenna.  In OFDM systems,  $N$ subcarriers are in a parallel transmission. A part  of the subcarriers  are selected as pilot subcarriers and the remaining ones transmit data. ${{\mathbf{s}}^{\left( {{n_t}} \right)}} \in {\mathbb{C}^{N \times 1}}$ represents the  OFDM symbol transmitted by the $n_t$-th antenna in time domain. Its corresponding form in frequency domain is ${{\mathbf{S}}^{\left( {{n_t}} \right)}} = {\bf{W}}{{\bf{s}}^{\left( {{n_t}} \right)}}$, in which $\mathbf{W}$ is the discrete fourier transform (DFT) matrix,   ${W\left[m,n\right]} = {N^{-1/2}}\exp \left( {-j2\pi mn/N} \right)$, $m,n \in \left[ {0,N-1} \right]$.

The channel in  time domain is assumed to be a finite impulse response (FIR) filter. $h^{({n_t})}\left[n,l\right]$ represents the channel coefficient  of the $(l+1)$-th tap at the $(n+1)$-th  instant of the channel between the $n_t$-th antenna and the terminal. We assume that  $h^{({n_t})}\left[n,l\right]=0$, for $l < 0$ and $l  \ge  L$, in which $L$ is the length of the channel. The element of the received vector ${\mathbf{y}} \in {\mathbb{C}^{N \times 1}}$ by a terminal can be  expressed as
\begin{equation}
y[n] = \sum\limits_{{n_t} = 1}^{{N_t}} {\sum\limits_{l = 0}^{L - 1} {h^{({n_t})}\left[n,l\right]{s^{({n_t})}}[n - l]} }  + e[n],
\end{equation}
 in which, $e[n] \sim \mathcal{CN}(0,{\sigma ^2})$, $(n \in [0,N-1])$, is the noise term. $\mathbf{H}_t^{({n_t})} \in {\mathbb{C}^{N \times N}}$ describes channel matrix in time domain,
 \begin{equation}
 {{H^{(n_t)}_t} \left[p,q\right]} = h^{(n_t)}[p,\bmod (p - q,N)], (p,q \in [0,N - 1]),
 \end{equation}
 and another expression of the received vector ${\mathbf{y}}$ is
 \begin{equation}
 {{\bf{y}}} = \sum\limits_{{n_t} = 1}^{{N_t}} {{\bf{H}}_t^{({n_t})}{{\bf{W}}^H}{\bf{s}}^{({n_t})}}  + {\bf{e}}.
 \end{equation}
  The channel matrix in frequency domain ${\mathbf{H}_f^{({n_t})}}$ can be derived from $\mathbf{H}_f^{({n_t})}$ $=$ $\mathbf{W}{\mathbf{H}_t^{({n_t})}}$$\mathbf{W}^H$. The received vector in frequency domain $\mathbf{Y} \in {\mathbb{C}^{N \times 1}}$ is

\begin{equation}\label{recfre}
\mathbf{Y}  =  \sum\limits_{{n_t} = 1}^{{N_t}}{\mathbf{H}_f^{({n_t})}}\mathbf{S}^{({n_t})} + \mathbf{E},
\end{equation}
 in which $\mathbf{E}$ is the noise term in the frequency domain.

\subsubsection{BEM}

BEM \cite{Tang2011} is an important technique for DS channel estimation, which can reduce the number of the channel coefficients  to be estimated.  Consider an OFDM system with $N$ subcarriers and $L$ channel taps. ${\mathbf{h}}_l^{(n_t)} {=} {( {
{h^{(n_t)}[0,l]}, \cdots ,{h^{(n_t)}[N - 1,l]}  } )^T}  \in {\mathbb{C}^{N \times 1}}$ denotes the $l$-th channel tap. Each ${\mathbf{h}^{(n_t)}_l}, l \in [0,L-1]$ can be expressed as
\begin{equation}
{{\bf{h}}^{(n_t)}_l} = {\bf{V}}{{\bm{\theta}}^{(n_t)}_l} + {{\bm{\varepsilon }}^{(n_t)}_l},
\end{equation}
 in which ${{\bm{\theta}}^{(n_t)}_l} = {
({\theta^{(n_t)}[0,l]},  {\theta^{(n_t)}[1,l]},  { \cdots },  {\theta^{(n_t)}[D - 1,l]})^T
} \in  {\mathbb{C}^{D \times 1}}$ is the BEM coefficients, ${\bm{\varepsilon}^{(n_t)} _l} = {({\varepsilon^{(n_t)} [0,l]}, {\varepsilon^{(n_t)} [1,l]},  \cdots, {\varepsilon ^{(n_t)}[N - 1,l]})^T}$ is the BEM modeling error. ${\bf{V}} {=} \left(
{{{\bf{v}}_0}}, {{{\bf{v}}_1}}, \cdots, {{{\bf{v}}_{D - 1}}} \right)$, ${\mathbf{v}_d}$  $(d \in [0,D-1])$ is the BEM basis function, and $D$ $(D \ll N)$ is the BEM order. Apparently the number of channel coefficients to be estimated is reduced from $NL$ to $DL$ for one transmit-receive antenna pair. The vector of the channel taps for the $n_t$-th transmitting antenna can be formulated as
\begin{equation}\label{recovery}
\overline {\bf{h}}^{(n_t)}  = \left( {{\bf{V}} \otimes {{\bf{I}}_L}} \right)\overline {\bm{\theta}}^{(n_t)}  + \overline {\bm{\varepsilon }}^{(n_t)} ,
\end{equation}
in which
\begin{equation}
\begin{split}
{\overline{\bf{h}}}^{(n_t)}  {=} {( {{{(\widetilde {\bf{h}}}^{(n_t)}_0)^T}, \cdots ,{{(\widetilde {\bf{h}}}^{(n_t)}_{N - 1})^T}})^T},  \\
{\overline {\bm{\theta}}}^{(n_t)}  {=} {( {{{(\widetilde {\bm{\theta}}}^{(n_t)}_0)^T}, \cdots ,{{(\widetilde {\bm{\theta}}}^{(n_t)}_{D {-} 1}}})^T)^T},   \\
 \overline {\bm{\varepsilon }}^{(n_t)}  {=} {( {(\widetilde {\bm{\varepsilon }}_0^{(n_t)})^T, \cdots ,(\widetilde {\bm{\varepsilon }}_{N - 1}^{(n_t)})^T})^T},
 \end{split}
\end{equation}
 and
 \begin{equation}  \nonumber
\begin{split}
{\widetilde {\bf{h}}^{(n_t)}_n} {=} {( {h^{(n_t)}[n,0],  \cdots,  h^{(n_t)}[n,L - 1])}^T} {\in} {\mathbb{C}^{L \times 1}},
n {\in} [0,N{-}1],  \\
{\widetilde {\bm{\theta}}^{(n_t)}_d} {=} {({\theta^{(n_t)}[d,0],  \cdots,  \theta^{(n_t)}[d,L - 1])}^T} {\in} {\mathbb{C}^{L \times 1}},
d {\in} [0,D{-}1],   \\
{\widetilde {\bm{\varepsilon }}^{(n_t)}_n} {=} {( {{\varepsilon }^{(n_t)}[n,0],  \cdots,  {\varepsilon }^{(n_t)}[n,L - 1]})^T} {\in} {\mathbb{C}^{L \times 1}},
n {\in} [0,N{-}1].
\end{split}
\end{equation}

By simple derivation, we know that the channel matrix in frequency domain can be expressed as
\begin{equation}\label{matfre}
 {{\bf{H}}^{(n_t)}_f} = \sum\limits_{d = 0}^{D - 1} {{{\bf{V}}_d}{{\bm{\Theta}}^{(n_t)}_d}}  + {{\bf{\Delta }}^{(n_t)}},
\end{equation}
 in which ${\mathbf{V}_d} = {\mathbf{W}}diag\left( {{\mathbf{v}_d}} \right)\mathbf{W}^H,$  ${\bm{\Theta}^{(n_t)}_d} = diag{({{\sqrt N} {\mathbf{W}}   {(
{(\widetilde{\bm{\theta}}^{(n_t)}_d)^T,} {{\mathbf{0}_{1 \times (N - L)}}
})}^T}),}$ ${\bm{\Delta}}$ is the modeling error \cite{Cheng2013}.

\subsection{CS and DCS}

CS is an attractive framework, which recovers a high-dimensional sparse signal from a low dimensional observed vector.  CS solves the  underdetermined problem
\begin{equation}
\mathbf{y} = \mathbf{A} \mathbf{x}+\mathbf{e},
\end{equation}
 in which $\mathbf{x} \in \mathbb{C}^{N \times 1}$ is an unknown vector with sparsity $K$, i.e.${\left\| \mathbf{x} \right\|_{{0}}} = K$.  $\mathbf{A} \in \mathbb{C}^{M \times N} (M \ll N)$  is the measurement matrix, $\mathbf{y} \in \mathbb{C}^{M \times 1}$ represents the observed vector, and $\mathbf{e}$ denotes the noise term. Fundamental research indicates that if $\mathbf{x}$ is a sparse vector and $\mathbf{A}$ satisfies restricted isometry  property (RIP) condition \cite{Duarte2011}, a high probability of exact recovery of $\mathbf{x}$ can be guaranteed. But it is difficult to  verify RIP condition for the prohibitive complexity. Instead, mutual coherence property (MCP) \cite{Duarte2011} is an important reference value of the measurement matrix. It is defined as
\begin{equation}
\mu(\mathbf{A})=\max_{1\leq i\neq j\leq N}\frac{\vert\left\langle \mathbf{a}_{i},\mathbf{a}_{j}\right\rangle \vert}{{\lVert\mathbf{a}_{i}\rVert}_{2}{\lVert\mathbf{a}_{j}\rVert}_{2}},
\end{equation}
where ${\bf a}_{i}$ and ${\bf a}_{j}$ denote the columns of ${\bf A}$.  As proved in \cite{Duarte2011}, a smaller MCP will lead to a more accurate recovery of $\mathbf{x}$. Basis pursuit (BP) \cite{Chen1998} and orthogonal matching pursuit (OMP) \cite{Mallat1993} are the widely adopted recovery algorithms of CS.

DCS framework is applied to jointly compress and recover  multiple correlated sparse  signals.  The basic form of DCS is
\begin{equation}
\mathbf{Y} = \mathbf{A}\mathbf{X} + \mathbf{E},
\end{equation}
 in which, $\mathbf{Y} = [{\mathbf{Y}_0}, \cdots ,{\mathbf{Y}_{J-1}}] \in {\mathbb{C}^{M \times J}}$,  $\mathbf{X} = [{\mathbf{X}_0}, \cdots , {\mathbf{X}_{J-1}}] \in {\mathbb{C}^{M \times J}}$, ${\mathbf{Y}_j}$ $ (j \in [0,J-1 ])$ and ${\mathbf{X}_j}$  $ ( j \in [0,J-1])$ denotes the $j$-th column of $\mathbf{Y}$ and $\mathbf{X}$ respectively.  All the columns of $\mathbf{X}$ share the same nonzero positions. $\mathbf{E} = [{\mathbf{E}_0}, \cdots ,{\mathbf{E}_{J - 1}}] \in {\mathbb{C}^{M \times J}}$ is the noise matrix. $\mathbf{A}$ is the measurement matrix as above. It has been proved that DCS provides higher accuracy  with fewer observed values than CS by utilizing the common sparsity. Simultaneous-OMP (SOMP) \cite{Duarte2011} is an important algorithm for the recovery of DCS.

\section{the proposed channel estimation scheme in Large-Scale MIMO Systems}

\subsection{Channel Sparsity  in Large-Scale MIMO Systems}

To  clearly elaborate the channel sparsity in large-scale MIMO systems, we introduce two properties of the channel as follows:

\begin{property}
The channel coefficients $\{\widetilde {\bf{h}}^{(n_t)}_n\}{\in} {\mathbb{C}^{L \times 1}}$ $(n \in [0,N-1])$ of a transmit-receive antenna pair have common sparsity in delay domain \cite{Cheng2013}, i.e., their nonzero positions are the same.
\end{property}

 Assume that there are $L$ channel taps  for a  transmit-receive antenna pair and the index set of them is denoted  as  $[0,L-1]$. There exist  $K$ $(K \ll L)$ strong taps, $\left\{ {l_1}, \cdots ,{l_K} \right\}\subset [0,L-1]$,    and the remaining ones are minor channel coefficients which can be neglected. It means that ${\mathbf{h}^{(n_t)}_l}=\textbf{0}$, $l \notin \left\{ {l_1}, \cdots ,{l_K} \right\}$. $\{\widetilde {\bf{h}}^{(n_t)}_n\}$  $(n \in [0,N-1])$ are all $K$-sparse vectors and their common non-zero positions are $\left\{ {l_1}, \cdots ,{l_K} \right\}$.

\begin{property}
In large-scale MIMO systems, all the transmit-receive links  are scattering  invariantly in space and share common sparsity in delay domain  if $\frac{{{d_{\max }}}}{C} \le \frac{1}{{10BW}}$, in which $d_{max}$ denotes the maximum distance  between any two  transmitting antennas, $C$ is the speed of light and $BW$ is the signal bandwidth.
\end{property}

As referred in \cite{Masood2015}, if $\frac{{{d_{\max }}}}{C} \le \frac{1}{{10BW}}$, it is safe to assume that all the transmit-receive antenna pairs have the same nonzero positions of the channel taps in a large-scale MIMO system. We present the parameters of the concerned systems in Table I \cite{Masood2015}. We can get the conclusion that in the LTE system \cite{Rangan2014}, 25$\times$25 array ($24d<d_{max}$, $d$ represents the distance between two adjacent antennas and $\lambda $ is the wavelength) has common channel support. In the mmWave large-scale MIMO system \cite{Rangan2014} proposed for the 5G,  10$\times$10 array ($9d<d_{max}$)  guarantees the common channel sparsity.

\begin{table}
\centering
\caption{Parameters}
\begin{tabular}{|c|c|c|c|c|}
\hline\hline
System&BW&Center frequency&$d_{max}$&$d=\lambda /2$\\
\hline\hline
LTE&20MHz&2.6GHz&1.5m&0.058m\\
\hline
mmWave&1GHz&60GHz&0.03m&0.0025m\\
\hline\hline
\end{tabular}
\end{table}

\newtheorem{theorem}{Theorem}
\begin{theorem}
 The elements of  the BEM coefficients set $\{ {\bm{\tilde {\theta}}} _d^{(n_t)}\}$  $(n_t \in [1, N_t]$, $d \in [0,D-1])$ in a large-scale MIMO system share the common sparsity under the condition of $\frac{{{d_{\max }}}}{C} \le \frac{1}{{10BW}}$, in which $d_{max}$ denotes the maximum distance  between any two  transmitting antennas, $C$ is the speed of light and $BW$ is the signal bandwidth.
\end{theorem}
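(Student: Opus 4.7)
The plan is to chain the two properties stated just above into a statement about a single common delay-domain support, and then push that support through the BEM representation to the $\tilde{\bm{\theta}}_d^{(n_t)}$'s. First I would combine Property 1 and Property 2: Property 1 gives, for each fixed antenna $n_t$, a common delay-domain support $\mathcal{S}^{(n_t)} \subset [0,L-1]$ of size $K$ shared by all time snapshots $\tilde{\mathbf{h}}_n^{(n_t)}$; Property 2 (which kicks in under the geometric condition $d_{\max}/C \le 1/(10 BW)$) then says $\mathcal{S}^{(n_t)} = \mathcal{S}$ is actually the same set $\{l_1,\dots,l_K\}$ for every $n_t \in [1,N_t]$. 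Rephrased in terms of the per-tap time-trajectory vector $\mathbf{h}_l^{(n_t)} \in \mathbb{C}^{N\times 1}$, this means $\mathbf{h}_l^{(n_t)} = \mathbf{0}$ for every $n_t$ and every $l \notin \mathcal{S}$.

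Next I would feed this vanishing into the BEM decomposition $\mathbf{h}_l^{(n_t)} = \mathbf{V}\bm{\theta}_l^{(n_t)} + \bm{\varepsilon}_l^{(n_t)}$. Adopting the standard convention that the BEM coefficients are the (unique) least-squares fit $\bm{\theta}_l^{(n_t)} = (\mathbf{V}^H\mathbf{V})^{-1}\mathbf{V}^H \mathbf{h}_l^{(n_t)}$ with residual $\bm{\varepsilon}_l^{(n_t)}$ orthogonal to the column span of $\mathbf{V}$, an identically zero channel trajectory yields $\bm{\theta}_l^{(n_t)} = \mathbf{0}$ and $\bm{\varepsilon}_l^{(n_t)} = \mathbf{0}$. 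In particular, the scalar $\theta^{(n_t)}[d,l] = 0$ for every BEM order $d \in [0,D-1]$ whenever $l \notin \mathcal{S}$.

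Finally I would reread off the definition $\tilde{\bm{\theta}}_d^{(n_t)} = (\theta^{(n_t)}[d,0],\dots,\theta^{(n_t)}[d,L-1])^T$: its $l$-th entry vanishes exactly when $\bm{\theta}_l^{(n_t)}$ does, so the support of $\tilde{\bm{\theta}}_d^{(n_t)}$ is contained in $\mathcal{S}$, uniformly in $(n_t,d)$. A matching lower bound on the support (i.e., that every $l \in \mathcal{S}$ is actually used by the model for some~$d$) follows from the fact that the strong taps produce nonzero time trajectories with a genuine BEM expansion, so the supports coincide. This delivers the claimed common sparsity across all $N_t D$ vectors.

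The one delicate point — and the only real obstacle — is the step linking $\mathbf{h}_l^{(n_t)} = \mathbf{0}$ to $\bm{\theta}_l^{(n_t)} = \mathbf{0}$: written naively as $\mathbf{0} = \mathbf{V}\bm{\theta}_l^{(n_t)} + \bm{\varepsilon}_l^{(n_t)}$, one could absorb an arbitrary vector into $\bm{\theta}_l^{(n_t)}$ and cancel it with $\bm{\varepsilon}_l^{(n_t)}$. This ambiguity is removed precisely by fixing the BEM coefficient definition as the orthogonal projection onto $\mathrm{span}(\mathbf{V})$ (the convention used throughout the paper and in the cited BEM references), which I would state explicitly at the top of the proof so that the zero-channel-implies-zero-coefficient implication becomes immediate.
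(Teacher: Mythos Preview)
Your proposal is correct and follows essentially the same route as the paper: combine Properties~1 and~2 to get $\mathbf{h}_l^{(n_t)}=\mathbf{0}$ for $l\notin\{l_1,\dots,l_K\}$ uniformly in $n_t$, then push this through the BEM relation to conclude $\bm{\theta}_l^{(n_t)}=\mathbf{0}$ on the same index set, and reassemble into the $\tilde{\bm{\theta}}_d^{(n_t)}$'s. Your treatment is in fact more careful than the paper's, which simply writes $\mathbf{h}_l^{(n_t)}=\mathbf{V}\bm{\theta}_l^{(n_t)}$ (tacitly dropping $\bm{\varepsilon}_l^{(n_t)}$) and infers the implication directly; your explicit invocation of the least-squares/projection definition of the BEM coefficients is exactly what makes that step rigorous.
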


\begin{proof}
 As ${\mathbf{h}^{(n_t)} _l} ={\mathbf{V}}{{\bm{\theta}}^{(n_t)} _l}$ and ${{\mathbf{h}}^{(n_t)} _l}=\textbf{0}$ for $l \notin \left\{ {l_1}, \cdots ,{l_K} \right\}$, we have ${{\bm{\theta}}^{(n_t)} _l}=\textbf{0}$ for $l \notin \left\{ {l_1}, \cdots ,{l_K} \right\}$. Similar with $\{{\widetilde {\bf{h}}^{(n_t)}_n}\}$,  $\{{\widetilde {\bm{\theta}}^{(n_t)}_d}\}$ are  $K$-sparse vectors and their common  non-zero positions are $\left\{{l_1}, \cdots ,{l_K} \right\}$. As analyzed above, all the transmit-receive antenna pairs have the common sparsity in delay domain in the concerned systems. We have that $\{{\widetilde {\bm{\theta}}^{(n_t)}_d}\}$ share the common sparsity for $d \in [0,D-1]$, $n_t \in [1,N_t]$.

\end{proof}

\subsection{Channel Estimation Scheme}

\begin{figure}[!hbp]
\includegraphics[scale=0.5]{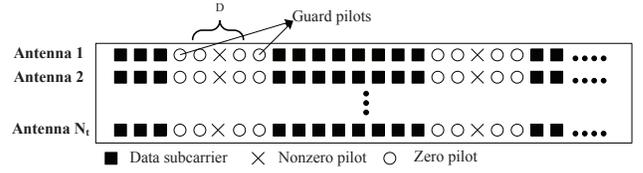}
\caption{Pilot position diagram (for $D$=3)}
\label{1}
\end{figure}

The complex exponential-BEM (CE-BEM) \cite{Tang2011} with order $D$ is exploited in the proposed scheme and the basis function is ${{\bf{v}}_d} = {(
1, \cdots, {{e^{j\frac{{2\pi }}{N}n(d - \frac{{D - 1}}{2})}}}, \cdots, {{e^{j\frac{{2\pi }}{N}(N - 1)(d - \frac{{D - 1}}{2})}}}
 )^T}, d {\in} [0,D{-}1]$.

 We arrange the pilot subcarriers as shown in Fig. 1.  A non-zero pilot  is equipped with $D-1$  zero pilots on each side. The pilot position of all the transmitting antennas is the same.  Assume that the pilots are arranged in $G$ groups. We select the $D$ pilot subcarriers in the middle of each group for channel estimation and the remaining ones are guard pilots. As analyzed in \cite{Tang2011}, it is a ICI-free structure. The index set of the selected subcarriers  $\mathcal{S}_d$  $(d \in [0,D-1])$ includes the $(d+1)$-th selected pilot subcarrier in each group. The index set of the non-zero pilots $\mathcal{S}_{\frac{{D - 1}}{2}}$ is also denoted as $\mathcal{S}_{cen}$. Thus we have
 \begin{equation}
  \begin{array}{*{20}{c}}
{{{\mathcal{S}}_0} = {{\mathcal{S}}_{cen}} - \frac{{D - 1}}{2}}\\
 \vdots \\
{{{\mathcal{S}}_{\frac{{D - 1}}{2}}} = {{\mathcal{S}}_{cen}}}\\
 \vdots \\
{{{\mathcal{S}}_{D - 1}} = {{\mathcal{S}}_{cen}} + \frac{{D + 1}}{2}}
\end{array}.
\end{equation}

\begin{theorem}
 Assume that CE-BEM with order $D$ is utilized to approximate the channel coefficients and ${\mathbf{P}^{{(n_t)}}}$ $(n_t \in [1,N_t])$  denotes the values of the non-zero pilots.  The received selected pilots whose indices correspond to $\mathcal{S}_d$ $(d \in [0,D-1])$ can be expressed as
\begin{equation}
{\left[ {\bf{Y}} \right]_{{{\cal S}_d}}} = \sum\limits_{{n_t} = 1}^{{N_t}} {diag( {{\bf{P}}^{{(n_t)}}}) {{[{{\bf{W}}_L}]}_{{{\cal S}_{cen}}}}} {\bm{\widetilde {\theta}}}_d^{{(n_t)}} + {{{\bm{\eta }}}_d}.
\end{equation}
The recovery of the BEM coefficients $\mathbf{X}$ can be organized as a DCS problem
\begin{equation}\label{DCSproblem}
{\bf{Y}}_R=\bf{\Phi}\bf{X}+{{\bm{\eta }}_R},
\end{equation}
in which
\begin{equation}\label{Y_R}
{{\bf{Y}}_R} = \left( {\begin{array}{*{20}{c}}
{{{\left[ {\bf{Y}} \right]}_{{{\mathcal{S}}_0}}}},& \cdots, &{{{\left[ {\bf{Y}} \right]}_{{{\mathcal{S}}_{D - 1}}}}}
\end{array}} \right),
\end{equation}
\begin{equation}\label{fi}
{\bf{\Phi }} = \left( {diag( {{\bf{P}}^{(1)}}) {{\left[ {{{\bf{W}}_L}} \right]}_{{{\mathcal{S}}_{cen}}}}} \right., \cdots, \left. {diag( {{\bf{P}}^{({N_t})}}) {{\left[ {{{\bf{W}}_L}} \right]}_{{{\mathcal{S}}_{cen}}}}} \right),
\end{equation}
\begin{equation}
{\bf{X}} = \left( {\begin{array}{*{20}{c}}
{{\bm{\widetilde {\theta}}}_0^{(1)}}& \cdots &{{\bm{\widetilde {\theta}}}_{D - 1}^{(1)}}\\
 \vdots & \ddots & \vdots \\
{{\bm{\widetilde {\theta}}}_0^{({N_t})}}& \cdots &{{\bm{\widetilde {\theta}}}_{D - 1}^{({N_t})}}
\end{array}} \right),
\end{equation}
\begin{equation}
{{\bm{\eta}} _R} = \left( {\begin{array}{*{20}{c}}
{{\bm{\eta} _0}},& \cdots, &{{{\bm{\eta}} _{D - 1}}}
\end{array}} \right).
\end{equation}
In (\ref{DCSproblem}), ${\bf{Y}}_R$ is the received selected pilot subcarriers.   In (\ref{Y_R}), $\mathbf{Y}$ denotes the received vector in frequency domain, and ${\left[ \mathbf{Y} \right]_{{\mathcal{S}}_d}}$ consists of the extracted rows of $\mathbf{Y}$ whose indices correspond to the elements in set ${\mathcal{S}_d}$. In (\ref{fi}), $\mathbf{W}_L$  denotes the first $L$ columns of the DFT matrix.
\end{theorem}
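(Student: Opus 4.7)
The plan is to start from the frequency-domain input/output relation (\ref{recfre}), plug the BEM expansion (\ref{matfre}) of $\mathbf{H}_f^{(n_t)}$ into it, and then use the structure of CE-BEM together with the pilot layout of Fig.~1 to peel off one BEM index $d$ at a time. Concretely, I first write
\begin{equation} \nonumber
\mathbf{Y} \;=\; \sum_{n_t=1}^{N_t}\sum_{d=0}^{D-1} \mathbf{V}_d\,\bm{\Theta}_d^{(n_t)}\,\mathbf{S}^{(n_t)} \;+\; \mathbf{E}',
\end{equation}
absorbing the modelling residuals $\bm{\Delta}^{(n_t)}\mathbf{S}^{(n_t)}$ into a combined perturbation term that, after row selection, will be called $\bm{\eta}_d$.

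Next I would exploit the crucial feature of CE-BEM: because $\mathbf{v}_d$ is a pure complex exponential of frequency $(d-\tfrac{D-1}{2})/N$, the matrix $\mathbf{V}_d=\mathbf{W}\,\mathrm{diag}(\mathbf{v}_d)\,\mathbf{W}^H$ is a circular shift by $(d-\tfrac{D-1}{2})$ rows. Applied to the diagonal matrix $\bm{\Theta}_d^{(n_t)}=\mathrm{diag}(\sqrt{N}\,\mathbf{W}_L\,\bm{\widetilde{\theta}}_d^{(n_t)})$, this shifts the single non-zero entry of each column $k$ from row $k$ to row $k+(d-\tfrac{D-1}{2})$. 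Selecting the row set $\mathcal{S}_d=\mathcal{S}_{cen}+(d-\tfrac{D-1}{2})$ therefore picks up exactly the entries sitting originally on $\mathcal{S}_{cen}$, i.e.\ the rows that are multiplied by the values of $\mathbf{S}^{(n_t)}$ on $\mathcal{S}_{cen}$.

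I would then invoke the ICI-free property of the pilot pattern: every non-zero pilot in $\mathcal{S}_{cen}$ is flanked by $D-1$ zero guard pilots, so the only surviving contribution to $[\mathbf{Y}]_{\mathcal{S}_d}$ from the $d$-th BEM index is through $\mathbf{P}^{(n_t)}$, while contributions from indices $d'\neq d$ are shifted onto positions that either coincide with zero-pilot columns of $\mathbf{S}^{(n_t)}$ or with data rows not in $\mathcal{S}_d$, and thus vanish from $[\mathbf{Y}]_{\mathcal{S}_d}$. Combining the three pieces—the shift structure of $\mathbf{V}_d$, the diagonal form of $\bm{\Theta}_d^{(n_t)}$ with rows $[\mathbf{W}_L]_{\mathcal{S}_{cen}}$ acting on $\bm{\widetilde{\theta}}_d^{(n_t)}$, and the vanishing of the cross-terms—gives equation~(13) directly. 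Stacking these $D$ equations horizontally over $d$ yields the DCS formulation: the left-hand sides form $\mathbf{Y}_R$, the shared measurement matrix is $\bm{\Phi}$, and the columns of $\mathbf{X}$ inherit the common sparsity guaranteed by Theorem~1, giving (\ref{DCSproblem}).

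The main obstacle is the bookkeeping of the circular shifts and the verification that the zero-guard regions on both sides of each non-zero pilot fully suppress the cross-BEM leakage for every $d,d'\in[0,D-1]$ simultaneously; this is where the specific choice of $D-1$ guards on each side (rather than fewer) becomes essential. Once this isolation argument is made precise, the remainder is mechanical vectorization, and the common-sparsity property of the columns of $\mathbf{X}$ is simply imported from Theorem~1 and therefore justifies invoking the DCS framework of Section~II-B.
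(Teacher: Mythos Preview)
Your proposal is correct and follows essentially the same route as the paper: expand $\mathbf{Y}$ via (\ref{matfre}), recognize that for CE-BEM each $\mathbf{V}_d$ is a cyclic row-shift by $d-\tfrac{D-1}{2}$, rewrite $\bm{\Theta}_d^{(n_t)}\mathbf{S}^{(n_t)}=\mathrm{diag}(\mathbf{S}^{(n_t)})\mathbf{W}_L\bm{\widetilde{\theta}}_d^{(n_t)}$, and then show that row-selecting at $\mathcal{S}_{\bar d}$ annihilates every term with $d\neq\bar d$ because the shift lands on zero guard pilots, leaving exactly $\mathrm{diag}(\mathbf{P}^{(n_t)})[\mathbf{W}_L]_{\mathcal{S}_{cen}}\bm{\widetilde{\theta}}_d^{(n_t)}$. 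The subsequent horizontal stacking and the appeal to Theorem~1 for the joint sparsity of $\mathbf{X}$ are also identical to the paper's argument.
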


\begin{proof}
Please refer to the Appendix.
\end{proof}

  Generate random sequences  consisting of $\pm 1$ with the probability of 1/2  as the values of the non-zero pilots ${\mathbf{P}}^{(n_t)} \in {\mathbb{Z}^{G \times 1}}$, $n_t \in [1,N_t]$. Then generate  ${\mathcal{S}_{cen}}=\{s_1, s_2, \ldots, s_G \}$  randomly, which satisfies $\left| {s_u - s_v} \right| \ge 2D - 1, (s_u, s_v \in [0,N-1], u, v \in [1,G])$.
The measurement matrix $\mathbf{\Phi}$   formulated as (\ref{fi}) has a low MCP to guarantee the recovery performance.

As discussed above, $\{\widetilde{\bm{\theta}}_d^{({n_t})}\}$  $(d \in [0,D-1], n_t \in [1,N_t])$ share the common support.  Apparently, the columns of $\bf{X}$ have common sparsity.  SOMP  is utilized to solve (\ref{DCSproblem}). The CE-BEM coefficients ${\bar{\bm{\theta}}^{({n_t})}}$ $(n_t \in [1,N_t])$ can be obtained by some simple rearrangement  of $\mathbf{X}$. Then substituting  ${\bar{\bm{\theta}}^{({n_t})}}$ $(n_t \in [1,N_t])$ to (\ref{recovery}), we get the channel coefficients ${\bar{\bf h}}^{({n_t})}$ $(n_t \in [1,N_t])$.

\subsection{Linear Smoothing Method for Large-Scale MIMO}

As referred in \cite{Mostofi2005}, if the normalized Doppler shift $\vartheta < 0.2$, the channel coefficients ${\bf{h}}_l^{({n_t})} \in \mathbb{C}^{N \times 1}$ presents linear correlation with instant $n$ $(n \in [0,N-1])$. We utilize this property to process the the channel coefficients  and get a more accurate estimation.

\begin{itemize}
  \item Get the indices of strong taps $\left\{ {l_1}, \cdots ,{l_K} \right\}$ by comparison of the estimated channel coefficients ${\bf{h}}_l^{({n_t})}$, $l \in \left\{ {l_1}, \cdots ,{l_K} \right\}$, $n_t \in [1,N_t]$.
  \item Get two special point of each ${\bf{h}}_{l_k}^{({n_t})}$: $\hat h^{({n_t})}\left[\frac{N}{4}-1,l_k\right] {\approx} \frac{2}{N}(\sum\limits_{n = 0}^{N/2 - 1} {h^{({n_t})}\left[n,{l_k}\right]} )$, $\hat h^{({n_t})}\left[\frac{3N}{4}-1,{l_k}\right] {\approx} \frac{2}{N}(\sum\limits_{n = N/2}^{N-1} {h^{({n_t})}\left[n,{l_k}\right]} )$, $l \in \left\{ {l_1}, \cdots ,{l_K} \right\}$, $n_t \in [1,N_t]$.
  \item Calculate the slope of the approximate line determined by the two points of each ${\bf{h}}_{l_k}^{({n_t})}$: $\beta _{{l_k}}^{({n_t})} = \frac{{\hat h^{({n_t})}\left[\frac{N}{4}-1,{{l_k}}\right] - \hat h^{({n_t})}\left[\frac{3N}{4}-1,{{l_k}}\right]}}{{N/2}}$, $l \in \left\{ {l_1}, \cdots ,{l_K} \right\}$, $n_t \in [1,N_t]$.

  \item The processed channel is expressed as $\hat{h}^{({n_t})}\left[n,{l_k}\right] = (n + 1 - \frac{N}{4})\beta _{{l_k}}^{({n_t})} + \hat h^{({n_t})}\left[\frac{N}{4}-1,{l_k}\right]$, $l \in \left\{ {l_1}, \cdots ,{l_K} \right\}$, $n_t \in [1,N_t]$, $n \in [0,N-1]$ .
\end{itemize}

\section{SIMULATION RESULTS}

In this section, we compare the  normalized mean square error (NMSE) performance of the conventional LS estimator \cite{Tang2011}, the LCC channel estimation scheme \cite{Ren2013}, and our proposed channel estimation scheme under the DS channel in large-scale MIMO systems. Jake's model is employed to generate the DS channel. Quadrature phase-shift keying (QPSK) is adopted as the modulation technique. NMSE(dB)$ = 10lo{g_{10}}(\frac{{\left\| {{{vec({\bf{\hat{h}}})}}} - {vec({\bf{h}})} \right\|_2^2}}{\left\| vec({\bf{h}}) \right\|_2^2})$,  ${{\bf{\hat{h}}}}$ represents the estimated CSI and  $\bf{h}$ stands for the accurate CSI.
\begin{figure}
\centering
\includegraphics[scale=0.4]{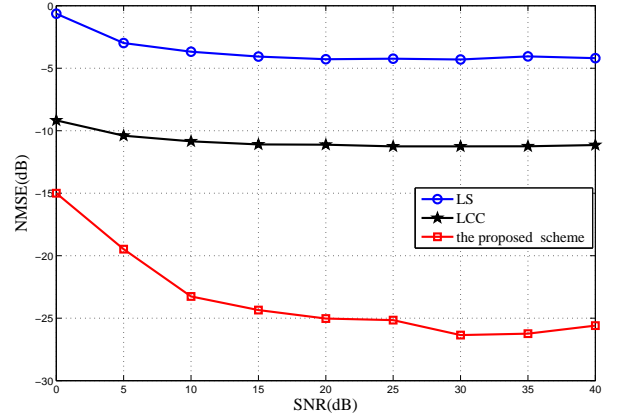}
\label{figure_NMSE}
\caption{The NMSE of the estimated CSI vs. SNR without smoothing.}
\end{figure}
\begin{figure}
\centering
\includegraphics[scale=0.4]{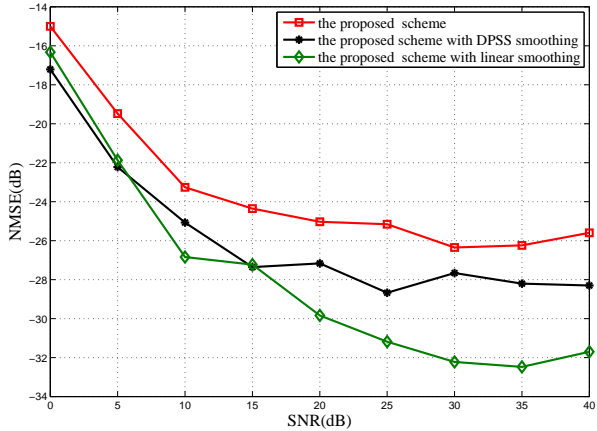}
\label{figure_optimazition}
\caption{The NMSE of the estimated CSI vs. SNR with different smoothing methods.}
\end{figure}
\begin{figure}
\centering
\includegraphics[scale=0.4]{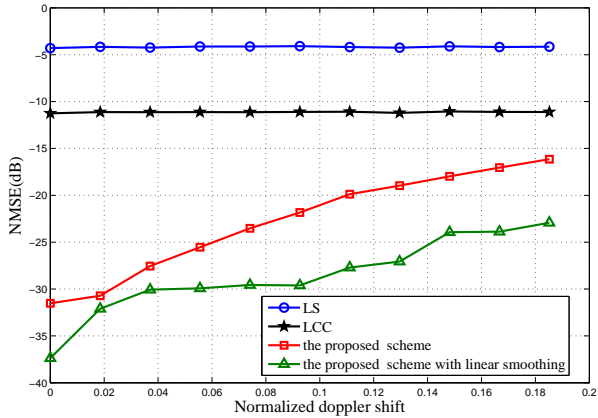}
\label{figure_Doppler}
\caption{The NMSE of the estimated CSI vs. the normalized doppler shift.}
\end{figure}
\begin{figure}
\centering
\includegraphics[scale=0.4]{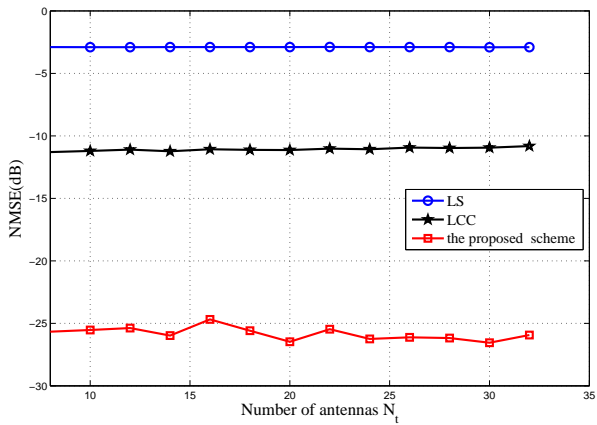}
\label{figure_antenna_cut}
\caption{The NMSE of the estimated CSI vs. the number of antennas $N_t$.}
\end{figure}

In Fig. 2, we compare the NMSE of the estimated CSI versus the  signal-to-noise ratio (SNR), under the number of subcarriers $N=1024$, the number of nonzero pilots $G=96$, the length of the channel $L=16$, the number of the strong paths $K=2$, the CE-BEM order $D=3$, the  normalized  doppler  $\upsilon =0.057 $  and the number of  transmitting antennas $N_t=16$. The total number of pilots is $G(2D-1)=480$. For fair comparison, we assume the same pilot overheads in each estimation scheme. From Fig. 2, we can see that the performance of LCC deteriorates notably  since it  utilizes the sparsity in delay-doppler domain, which is reduced by the large doppler shift and the  large number of antennas. The proposed channel estimation scheme has a substantial performance gain for the reason that  it takes advantage of the common sparsity of all the transmitting antennas in delay domain and also benefits from the  ICI-free structure.

In Fig. 3, we compare the NMSE performance of DPSS smoothing treatment in \cite{Cheng2013} and the proposed linear smoothing method under $N=1024$, $G=96$, $L=16$, $K=2$, $D=3$, $N_t=16$, and $\upsilon =0.057$. From this figure, we can see the performance gain of the linear smoothing method scheme. This is because it takes full advantage of the linear characteristics presented by the DS channels with the normalized doppler shift less than 0.2.


In Fig. 4, we compare the NMSE ¡¡performance versus the normalized doppler shift $\upsilon$ under $N=1024$, $G=96$, $L=16$, $K=2$, $D=3$, $N_t=16$, and SNR=20dB.  From this figure, we observe that the quality of the estimated CSI gets worse as the normalized doppler shift
$\upsilon$ increases. This is because a large $\upsilon$ leads to a large channel modeling error.

In Fig. 5, we compare the NMSE ¡¡performance versus the number of antennas $N_t$, under $N=1024$, $L=16$, $K=2$, $D=3$, and SNR=20dB. The number of pilots is $3KN_t(2D-1)$, which increases in proportion to $N_t$. From this figure, we observe that the variation of $N_t$ nearly has no effect on the performance of the estimated CSI. This is because the more pilot overheads make up for the performance deterioration brought by the increase of $N_t$.  Our proposed scheme can be applied to the situations with different number of antennas.

\section{Conclusion}
In this paper,  a compressive channel estimation scheme is proposed  for DS channel in large scale MIMO systems. In this scheme, we introduce the BEM to reduce the number of  coefficients to be estimated, which decreases from $N_tNL$ to $N_tDL$, $D \ll  N$. At the same time, the requirement of the pilot overheads  is also decreased. Then we analyze the  sparsity of the BEM coefficients  of all the transmit-receive antenna pairs in delay domain. The BEM coefficients estimation is formulated  as a DCS problem, which has a linear structure with low complexity. Moreover, the proposed linear smoothing method improves the accuracy of estimation.  So we solve the problem of unaffordable pilot overheads and prohibitive complexity for DS channel estimation in large scale MIMO systems. Simulation results verify the accuracy of the proposed scheme.

\appendix[Proof of Theorem 2]

As derived in (\ref{recfre}), the received vector in frequency domain is
\begin{equation}
{\bf{Y}} = \sum\limits_{{n_t} = 1}^{{N_t}} {{\bf{H}}_f^{({n_t})}{\bf{S}}^{({n_t})} + {\bf{E}}}.
\end{equation}
Substitute (\ref{matfre}) into (\ref{recfre}), we have
\begin{equation}
{\bf{Y}} = \sum\limits_{{n_t} = 1}^{{N_t}} {\sum\limits_{d = 0}^{D - 1} {{{\bf{V}}_d}{{\bf{\Theta}}^{(n_t)}_d}{\bf{S}}^{({n_t})} + {\bf{E}}} }.
\end{equation}
${{\bf{V}}_d}$ in CE-BEM can be simplified as
\begin{equation}\label{Bq}
\begin{split}
{{\bf{V}}_d} &= {{\bf{W}}}diag( {{\bf{v}}_d}) {{\bf{W}}^H}\\
& = {\bf{E}}_N^{{ \downarrow _\alpha }}{{\bf{W}}}{{\bf{W}}^H}\\
& = {\bf{E}}_N^{{ \downarrow _\alpha }},
\end{split}
\end{equation}
in which, the CE-BEM function ${{\bf{v}}_d} = {\left( {\begin{array}{*{20}{c}}
1,& \cdots, &{{e^{j\frac{{2\pi }}{N}n(d - \frac{{D - 1}}{2})}}},& \cdots, &{{e^{j\frac{{2\pi }}{N}(N - 1)(d - \frac{{D - 1}}{2})}}}
\end{array}} \right)^T}$, ${{\bf{W}}}$ is the DFT matrix of order $N$, and ${\bf{E}}_N^{{ \downarrow _\alpha }}$ denotes that the $N$-order identity matrix shifts down circularly by $\alpha$ rows, $\alpha=d - \frac{{D - 1}}{2}$. ${{\bm{\Theta}}^{(n_t)}_d}{\bf{S}}^{({n_t})}$ can be simplified as
\begin{equation}\label{CqS}
\begin{split}
{{\bm{\Theta}}^{(n_t)}_d}{\bf{S}}^{({n_t})} & = diag( \sqrt N {{\bf{W}}}{\left( {{{\left( {{\widetilde{{\bm{\theta}}}^{(n_t)}_d}} \right)}^T},{0_{1 \times (N - L)}}} \right)^T}) {\bf{S}}^{({n_t})}\\
 &= diag( {\bf{S}}^{({n_t})}) {{\bf{W}}_{L}}{\widetilde{{\bm{\theta}}}^{(n_t)}_d},
\end{split}
\end{equation}
in which ${{\bf{W}}_{L}}$ represents the first $L$ columns of ${\bf{W}}$.
Substituting (\ref{Bq}) and (\ref{CqS}) into (\ref{recfre}), we can get that
\begin{equation}
{\bf{Y}} = \sum\limits_{{n_t} = 1}^{{N_t}} {\sum\limits_{d = 0}^{D - 1} {{\bf{E}}_N^{{ \downarrow _\alpha }} diag{({\bf{S}}^{({n_t})})} {{\bf{W}}_L}{\widetilde{{\bm{\theta}}}}_d^{({n_t})}} }  + {\bm{\eta }}.
\end{equation}
Selecting the pilot subcarriers, we have
\begin{equation}\label{selectrec}
\begin{split}
{{\bf{U}}_{\bar d}}{\bf{Y}}& = {{\bf{U}}_{\bar d}}(\sum\limits_{{n_t} = 1}^{{N_t}} {\sum\limits_{d = 0}^{D - 1} {{\bf{E}}_N^{{ \downarrow _\alpha }}diag{({\bf{S}}^{({n_t})})}{{\bf{W}}_L}\widetilde{{\bm{\theta}}}_d^{({n_t})}} } ) + {{\bf{U}}_{\bar d}}{\bm{\eta }}\\
 &= \sum\limits_{{n_t} = 1}^{{N_t}} {\sum\limits_{d = 0}^{D - 1} {{{\bf{U}}_{\bar d}}{\bf{E}}_N^{{ \downarrow _\alpha }}diag{({\bf{S}}^{({n_t})})}{{\bf{W}}_L}\widetilde{{\bm{\theta}}}_d^{({n_t})} + {{\bf{U}}_{\bar d}}{\bm{\eta }}} },
 \end{split}
\end{equation}
in which, ${{\bf{U}}_{\bar d}}$ consists of the selected rows of the $N$-order identity matrix whose indices correspond to the elements of $\mathcal{S}_{\bar d}$, ${\bar d} \in [0,D-1]$. It is not hard to discover that
\begin{equation}\label{select}
{{\bf{U}}_{\bar d}}{\bf{E}}_N^{{ \downarrow _\alpha }}diag{({\bf{S}}^{({n_t})})} = \left\{ {\begin{array}{*{20}{c}}
{diag( {{\bf{P}}^{({n_t})}})[\mathbf{I}_N]_{{\mathcal{S}_{cen}}}}\\
{\bf{0}}
\end{array}} \right.{\rm{ }}\begin{array}{*{20}{c}}
{\bar d = d = \frac{D-1}{2}}\\
{else}
\end{array}.
\end{equation}
Substitute (\ref{select}) into (\ref{selectrec}), we can get
\begin{equation}
{[{\bf{Y}}]_{{\mathcal{S}_d}}} = \sum\limits_{{n_t} = 1}^{{N_t}} {diag( {{\bf{P}}^{({n_t})}}) {{[{{\bf{W }}_L}]}_{{\mathcal{S}_{cen}}}}\widetilde{{\bm{\theta}}}_d^{({n_t})} + {{\bm{\eta }}_d}},
\end{equation}
in which, ${\left[ \mathbf{Y} \right]_{{\mathcal{S}}_d}}$ consists of the extracted rows of $\mathbf{Y}$ whose indices correspond to the elements in set ${\mathcal{S}_d}$.

We express ${\left[ \mathbf{Y} \right]_{{\mathcal{S}}_d}}$ in another way
\begin{equation}
\begin{split}
\begin{array}{l}
\begin{array}{*{20}{l}}
{{{[{\bf{Y}}]}_{{{\mathcal{S}}_d}}} = \left( {diag({{\bf{P}}^{(1)}}){{\left[ {{{\bf{W}}_L}} \right]}_{{{\mathcal{S}}_{cen}}}}} \right., \cdots ,\left. {diag({{\bf{P}}^{({N_t})}}){{\left[ {{{\bf{W}}_L}} \right]}_{{{\mathcal{S}}_{cen}}}}} \right)}\\
{\quad \quad \quad \quad  \cdot \left( {\begin{array}{*{20}{c}}
{{\bm{\widetilde \theta }}_d^{(1)}}\\
 \vdots \\
{{\bm{\widetilde \theta }}_d^{({N_t})}}
\end{array}} \right) + {{\bm{\eta }}_d}}
\end{array}\\
\quad \quad \quad = {\bm{\Phi }}\left( {\begin{array}{*{20}{c}}
{{\bm{\widetilde \theta }}_d^{(1)}}\\
 \vdots \\
{{\bm{\widetilde \theta }}_d^{({N_t})}}
\end{array}} \right) + {{\bm{\eta }}_d},
\end{array}
\end{split}
\end{equation}
in which, ${\bf{\Phi }} {=} \left( {diag({{\bf{P}}^{(1)}}){{\left[ {{{\bf{W}}_L}} \right]}_{{{\mathcal{S}}_{cen}}}}},\cdots, {diag({{\bf{P}}^{({N_t})}}){{\left[ {{{\bf{W}}_L}} \right]}_{{{\mathcal{S}}_{cen}}}}} \right)$, $d \in [0,D-1]$. All the received selected pilot subcarriers
\begin{equation}
\begin{split}
{{\bf{Y}}_R} &= \left( {\begin{array}{*{20}{c}}
{{{\left[ {\bf{Y}} \right]}_{{{\mathcal{S}}_0}}}},& \cdots, &{{{\left[ {\bf{Y}} \right]}_{{{\mathcal{S}}_{D - 1}}}}}
\end{array}} \right)  \\
&= \mathbf{\Phi} \left( {\begin{array}{*{20}{c}}
{{\bm{\widetilde {\theta}}}_0^{(1)}}& \cdots &{{\bm{\widetilde {\theta}}}_{D - 1}^{(1)}}\\
 \vdots & \ddots & \vdots \\
{{\bm{\widetilde {\theta}}}_0^{({N_t})}}& \cdots &{{\bm{\widetilde {\theta}}}_{D - 1}^{({N_t})}}
\end{array}} \right)+ {{\bm{\eta }}_R} \\
&= \mathbf{\Phi}\mathbf{X}+ {{\bm{\eta }}_R},
\end{split}
\end{equation}
in which, ${{\bm{\eta}} _R} = \left( {\begin{array}{*{20}{c}}
{{\bm{\eta} _0}},& \cdots, &{{{\bm{\eta}} _{D - 1}}}
\end{array}} \right)$ and ${\bf{X}} = \left( {\begin{array}{*{20}{c}}
{{\bm{\widetilde {\theta}}}_0^{(1)}}& \cdots &{{\bm{\widetilde {\theta}}}_{D - 1}^{(1)}}\\
 \vdots & \ddots & \vdots \\
{{\bm{\widetilde {\theta}}}_0^{({N_t})}}& \cdots &{{\bm{\widetilde {\theta}}}_{D - 1}^{({N_t})}}
\end{array}} \right)$. As the elements of the set $\{\widetilde{\bm{\theta}}_d^{({n_t})}\}$  $(d \in [0,D-1], n_t \in [1,N_t])$ have common sparsity, the columns of $\mathbf{X}$ share the same support apparently. We have that ${\bf{Y}}_R=\bf{\Phi}\bf{X}+{{\bm{\eta}} _R}$  is a DCS problem.
\ifCLASSOPTIONcaptionsoff
  \newpage
\fi

\bibliographystyle{IEEEtran}
\bibliography{IEEEabrv,IEEEfull}

\end{document}